\setlist[itemize]{itemindent=2em, leftmargin=0pt}
\setlist[enumerate]{label = \textup{(\roman*)}}
\setlist[description]{itemindent=2em, leftmargin=0pt}
\theoremstyle{definition}\newtheorem{Def}{Definition}[section]
\theoremstyle{definition}\newtheorem{Asmp}{Assumption}[section]
\theoremstyle{definition}\newtheorem{Ex}{Example}[section]
\theoremstyle{remark}
\theoremstyle{remark}
\theoremstyle{remark}
\theoremstyle{definition}
\theoremstyle{plain}\newtheorem{Th}{Theorem}[section]
\theoremstyle{plain}\newtheorem{Fact}{Fact}[section]
\theoremstyle{plain}\newtheorem{Le}{Proposition}[section]
\newcommand{\df}{\buildrel{}_\mathrm{def}\over=}
\DeclareMathOperator{\marg}{marg}
\author[Alexey V. Osipov]{Alexey V. Osipov$^{\ast,\ddagger}$}   
\author[Nikolay N. Osipov]{Nikolay N. Osipov$^{\dagger,\S,\P}$}     
\address{$^\ast$Swiss Re Institute, Z\"urich, Switzerland}
\address{$^\dagger$St. Petersburg State University, 
St. Petersburg, Russia}
\address{$^\ddagger$LLC Usetech Professional, Moscow, Russia}
\address{$^\S$St. Petersburg Department of V.~A.~Steklov Institute of Mathematics of the Russian Academy of Sciences, St. Petersburg, Russia}
\address{$^\P$HSE University, International Laboratory of Game Theory and Decision Making, St. Petersburg, Russia}
\email{nicknick AT pdmi DOT ras DOT ru}
\thanks{
The work was supported by the Russian Science Foundation grant 24-11-00087}
\title[Collective intelligence in science]{
Collective intelligence in science:\\ direct elicitation of diverse 
information 
from 
experts with unknown information structure}
\keywords{interpretability, wisdom of crowd, play money, prediction market, information pooling, information elicitation, rational expectation equilibrium, direct communication, large language models, scientific collaboration}
\begin{document}
\begin{abstract}
Suppose we need a deep collective analysis of an open scientific problem: there is a complex scientific hypothesis and a
large online group of mutually unrelated experts with relevant private information of a diverse and unpredictable nature. This information may be results of
experts' individual experiments, 
original reasoning
of some of them, results of AI systems they use, etc. We propose a simple mechanism based on 
a self-resolving play-money prediction market entangled with a chat. 
We show that such a system can
easily be brought to an equilibrium where participants directly share their private information on the hypothesis through the chat and trade as if the market were resolved in accordance
with the truth of the hypothesis. This approach will lead to efficient aggregation of relevant
information in a completely interpretable form even if the ground truth cannot be established and experts initially know nothing about each other and cannot perform complex Bayesian calculations. Finally, by rewarding the experts with some real assets proportionally to the play money they end up with,
we can get an innovative way to fund large-scale collaborative studies of any type.

\end{abstract}

\maketitle

The instruments like the Bayesian truth serum~\cite{Pr2004}, its refinements~\cite{WiPa2012,RaFa2013,RaFa2014}, Bayesian markets~\cite{Ba2017}, or the peer prediction method~\cite{MiReZe2005}
can effectively incentivize experts to truthfully report their private signals 
in settings where their reports respond to a single question and cannot be directly verified. The reported signals can be binary in response to questions like ``Are you satisfied with your life?'', multi-valued, or even continuous in response to questions like ``What is your probability estimate that humanity will survive past the year 2100?''.
Such universal mechanisms are effective and robust in most cases. However, there is a particular but important case where we need a dedicated, more specialized tool. 



Namely, suppose we need a deep collective analysis of an open scientific problem. To be more specific, suppose there is a complex scientific hypothesis~$H$ and a large online group of experts. In addition, momentarily assume that their pooled private information suffices to prove the validity of~$H$ (this assumption is by no means necessary in general). Such information may be of an arbitrarily complex nature: results of complex experiments, content of academic publications, sophisticated reasoning of some experts or LLMs they use, with the LLMs' answers depending heavily on the skills of the questioners, etc. Namely, not only the information itself, but the information structures of experts may be diverse and unpredictable, and the experts may initially know nothing about how each other's information is arranged. 
In this setting, if we attempt, for example, to iteratively utilize the Bayesian truth serum, repeatedly eliciting the experts' probabilities of~$H$ and publicizing them in order to reach a common consensus probability~\cite{Au1976,GePo1982}, then the result will hardly be equal to~$1$. Below we discuss this issue from a theoretical perspective: in terms of~\cite{GePo1982}, the direct and indirect communication equilibria do not have to coincide 
in our setting. In addition to the fact that the consensus probability may be far from the probability conditional on the pooled information, it is merely a number that cannot be interpreted by an analyst who is ignorant of the experts' information structure. Thus, the result may be almost useless scientifically. 

The same problems would arise if we tried, relying on the theory 
\cite{MKePa1986, NiBrGe1990, FeFoPe2005, ChMuCh2006}, to get a consensus probability by means of a conventional prediction market with betting on the truth of~$H$. Again, since the situation in question can hardly be described through finite and commonly known information structure, we can expect that the result would tend to be far from the direct communication 
equilibrium and, in addition, would be uninterpretable.
Moreover, two new issues would arise. First, we need to know the ground truth in order to resolve a conventional prediction market. This looks like a vicious circle: in order to reward experts for their attempt to validate~$H$, we need a conclusive proof or disproof of~$H$. Second, various empirical studies like \cite{MKePa1990,AlKiPf2009} demonstrate that even if 
the information structure is commonly known and information pooling is theoretically achievable, this may not be the case in reality because experts need to be perfect Bayesians in order to effectively elicit information from aggregates such as prediction market prices.

Nevertheless, we claim the following. Consider a self-resolving play-money prediction market where each participant
starts with one and the same amount of play money and
winners are determined by a binary random generator with probability parameter equal to the market final price. 
We argue that such a market can easily be brought to an equilibrium where participants directly share their private information on~$H$ via chat and trade as if the market were resolved in accordance with the truth of~$H$. 
This approach will lead to efficient aggregation of diverse information in a completely interpretable form even if the information structure is not initially commonly known. Furthermore, the analyst do not need to know the ground truth in order to reward experts, and experts do not need to perform Bayesian calculations in order to update their information. For example, if $H$ is a true hypothesis and the experts' pooled information suffices to prove it, then it will be proved even in situations where we cannot hope to achieve this by other methods and, what is more, the proof will be provided explicitly. 
By rewarding the winners with some real assets proportionally to the play money they end up with, we can effectively incentivize large-scale collaborative research of any type. 

We borrow the term ``self-resolving prediction market'' from \cite{AhWi2018}, where similar markets are experimentally studied in another context. Another example of self-resolving markets is described in \cite{DaKiLo2011}. 
The idea of a mechanism that provides incentives to elicit explanations for beliefs from experts  
was first proposed in our working paper~\cite{OsOs2023}. We thereby anticipated the later trend \cite{RoWiSch2023, SrKaBa2025}. In the present article, we will describe in detail and formalize the aforementioned idea in the standard setting of epistemic game theory. Before that, we will demonstrate that when it comes to a complex scientific hypothesis, the presence of direct information exchange is not just a useful feature, but an “all or nothing” question. It is also worth noting that in the aforementioned working paper \cite[Appendix C]{OsOs2023}, we additionally considered an extremely general setting in which experts were assumed to be neither Bayesian with respect to their information, nor risk neutral, nor rational in the sense of the von Neumann--Morgenstern axioms, and the exchange of information was 
an arbitrary flow of signals that were not specified a priori. Not fully formalized reasoning based on the analysis of extensive data from a conventional prediction market showed that even in such a setting, the desired behavior of experts can be expected.

\section{Problem}\label{sec:problem} 

Throughout this section, we consider a simple, but illustrative example of a situation where the existing instruments and the current consensus theory are insufficient. We emphasize that this is only an introductory example and our subsequent results address the general situation where experts' knowledge may be 
arbitrary.

\begin{enumerate}[label = (exm)]
\item\label{it:ex}
Suppose there are experts with a common prior~$\pi$ over the set~$\Omega$ of possible states of the world. Suppose $H$, $A$, and $B$ are pairwise disjoint 
nonnull events in $\Omega$, and~$H$ is a true hypothesis (i.e., it contains the true state of the world).
Proceeding as in the most basic examples, we suppose one of the experts knows that~$A$ is false, another expert knows that $B$ is false, and all the others possess only the public information, which 
is incorporated in the common prior. 
\end{enumerate}
In addition to the experts, consider an analyst who has no private information and know nothing about the experts' information structure, but would like to effectively assess the plausibility of~$H$. 
Suppose the analyst has some way to incentivize the experts to give true answers about their beliefs about~$H$, for example, by means of 
the Bayesian truth serum \cite{Pr2004,WiPa2012,RaFa2013,RaFa2014}. 
In this case, 
the answers of two informed experts will be $\pi(H\mid\neg A)$ and $\pi(H\mid\neg B)$, while all the other answers will be 
$\pi(H)$. Here $\neg A$ and~$\neg B$ are the complementary events of $A$ and $B$, respectively.
If, for example, 
\begin{equation}\label{eq:HAB1}
\pi(H) = \pi(A) = \pi(B) = \tfrac{1}{3},
\end{equation}
then the answers of both informed experts will be~$\frac{1}{2}$. Since the analyst knows nothing about the experts' information structure, 
he or she cannot interpret the answers. Instead, he or she can facilitate indirect communication between experts 
in order for them to interpret each other's beliefs. Namely, the analyst can make the answers public, repeat 
the 
survey to get the updated beliefs, make them public, repeat the survey, and so on. 
We show that the result of such a procedure will strongly depend on what the experts believe about each other's information structures. 

The information structures can be described~\cite{Au1976} as partitions of~$\Omega$: each expert is informed of that element 
of his or her partition that contains 
the true state of the world. 
In particular, $\neg A$ and $\neg B$ are elements of the partitions of the corresponding knowledgeable experts in the example under discussion. 
In this example, we also assume that $\pi$ is a non-atomic measure and do \emph{not} assume that the experts' information partitions are commonly known among them (as was the case in \cite{Au1976,GePo1982}). 
Such a setting is 
natural if we interpret $H$ as a complex scientific hypothesis, and the experts as a large online group of researchers who do not know each other and may possess information of arbitrary nature: 
results of experiments and measurements of arbitrary complexity and accuracy, 
answers of various LLMs to differently worded questions, 
content of academic publications, etc. 

We note that the non-atomicity of~$\pi$ by no means implies that experts simultaneously reflect on infinite number of separate heterogeneous objects. Although our example is simplified and does not map directly to the real world, this continuity assumption comes precisely from reality: for instance, the numerical results of a particular complex 
experiment rule out a continuum 
of its other results that were thought possible (cf. $A$), and leave a continuum 
of possibilities for the results of the other experiments (cf. $\neg A$). Informally, the non-atomicity of~$\pi$ means that it is impossible to guess which experiment was conducted based on the posterior probability of~$H$ alone (i.e., to guess an expert's information partition).

Next, situations with incomplete knowledge of each other's partitions can be formalized by means of types and hierarchies of beliefs \cite{Ha1967, MeZa1985} or, what is the same~\cite{BrDe1993}, 
through
new ``partitions'' ($\sigma$-algebras in fact) over a certain extension of~$\Omega$, which may be treated as commonly known structures. It is important to note that these new commonly known ``partitions'' may be infinite even if the initial ones were finite.

For simplicity, we make the following assumption on the experts' belief about each other's information structures.
\begin{enumerate}[label = (blf0)]
\item\label{it:sim0} It is commonly believed among the experts that the private information of 
any of them 
is 
one-directional, 
i.e., that each of them can, at most, determine the truth or falsity of 
some subset of $\neg H$. 
\end{enumerate}
We will assume~\ref{it:sim0} only in the context of~\ref{it:ex}. The framework~\ref{it:sim0} approximates a scenario where a hypothesis cannot be confirmed directly, but can be tested through various experiments. Each trial either increases the plausibility of the hypothesis if successful, or conclusively disproves it if not. For example, in drug development, various preclinical trials (in vitro and in vivo studies) serve this role: positive results 
increase confidence in the potential efficacy of a new
drug in future human clinical trials, while any negative result 
may preclude its viability and lead to cancellation of further trials at all.


Thus, it is common knowledge that each expert has either proved the hypothesis~$\neg H$, 
or ruled out some variants of its realization (maybe a null set of variants). 
Therefore, after the first survey 
and publicizing the results, two of which turn out to be greater than the prior $\pi(H)$ and the others coincide with~$\pi(H)$, it will become commonly known among the experts that some two nonnull events~$A$ and~$B$ within $\neg H$ are not true. Their priors $\pi(A)$ and $\pi(B)$ will also be calculated by the experts and will become commonly known among them (but not $A$ and $B$ themselves). The updated beliefs about~$H$ and, as a consequence, the results of the second survey 
will strongly depend on the experts' beliefs about 
overlapping of~$A$ and~$B$. One of the two extreme cases is where after the first iteration it will be common knowledge among experts that $A$ and $B$ represent completely different information and are therefore pairwise disjoint. In this case the second iteration will produce 
the final consensus 
coinciding with
$$
\pi(H\mid \neg A \cap \neg B) = \frac{\pi(H)}{1 - \pi(A) - \pi(B)}.
$$ 
Namely, in terms of~\cite{GePo1982} the revision process will take $2$ steps to converge and there will be no discrepancy between the direct and indirect communication equilibria.
In particular, if~\eqref{eq:HAB1} holds, then the analyst will see 
that $H$ is true with probability~$1$.

But from a Bayesian perspective, it is natural for the experts to expect some similarity in their private information,
and
the situation worsens once they 
doubt that their information is mutually complementary. 
The corresponding 
extreme case is where the experts have the false common belief that their private information has a single source and is not complementary at all, i.e., where after the first iteration it will be common belief that $A$ and $B$ are nested. Since~$\pi$ is non-atomic, this belief is always internally consistent no matter what $\pi(A)$ and $\pi(B)$ will be revealed to be. In this case, 
the second 
survey 
will produce the final consensus equal to
$$
\max\bigl(\pi(H\mid \neg A),\,\pi(H\mid \neg B)\bigr) = \frac{\pi(H)}{1 - \max\bigl(\pi(A),\pi(B)\bigr)}.
$$
In particular, if~\eqref{eq:HAB1} holds, then the consensus probability of $H$ will be $\tfrac{1}{2}$: the result will change from $1$ to $\tfrac{1}{2}$, i.e., from complete certainty to the maximum entropy! However, both extreme cases do not seem realistic, and in the case of~\eqref{eq:HAB1}, some intermediate value will be obtained if after the first iteration it will be common knowledge among the experts that each of them considers the possible values 
of~$\pi(A\cup B)$ to be uniformly distributed between the extreme values
$\tfrac{1}{3}$ and $\tfrac{2}{3}$. In this case, the second iteration will lead to the final consensus equal to the value
$$
\int_{\tfrac{1}{3}}^{\tfrac{2}{3}} \frac{dx}{1-x} = \log 2 \approx 0.69,
$$
which is between $\tfrac{1}{2}$ and $1$ and is also far from the pooling probability~$1$.

The underlying cause of such inefficiency is that 
the situation cannot be reduced to commonly known \emph{finite} information partitions: no matter how we extend~$\Omega$. This implies that the direct and indirect communication equilibria need no longer coincide almost surely, as was the case in~\cite{GePo1982}. If we consider even a more realistic situation without simplification~\ref{it:sim0}, 
then the result will in no way become more robust.



Thus, the result of the procedure under discussion may be almost arbitrary and uninterpretable in a setting where we are trying to aggregate extremely diverse information and arguments about a complex open problem. 
This leads us to the question whether there is a way to elicit experts' pooled information in an interpretable form, together with the probability of~$H$ given this information, in the case where the information structure is arbitrarily complex and is not commonly known.
In particular, for the above example the desired mechanism should always return the probability of~$H$ 
equal to $\pi(H\mid \neg A \cap \neg B)$, together with the information $\neg A \cap \neg B$ explicitly presented.

We emphasize once again that when discussing Bayesian truth serum, we are discussing the use of a very universal tool in a very specific situation for which it is not and should not be adapted. Therefore, our arguments in no way diminish the significance of this tool, and the specialized system we propose below cannot serve as a substitute for it.

\section{Preliminaries}
In prediction markets, participants can trade assets each of which binds a seller to pay a buyer $1$ unit of real or play money once a certain proposition is revealed to be true. Symmetrically, sellers keep the money received if the proposition is revealed to be false. For efficient matching of buyers and sellers, an automated market maker~\cite{Ha2007} can be applied (see also~\cite{OtSa2012}, where extremely flexible one is described). 
A well-known empirical fact is that clearing prices are well-calibrated in real-money prediction markets~\cite{PaCl2013} as well as in play-money ones~\cite{PeLaGi2001} in the sense of~\cite{DeGrFi1982}: prices approximate the real frequencies of true propositions. This fact by no means implies the convergence of prices in such markets to common posteriors or, even less, efficient information pooling, but it makes the corresponding questions reasonable.  
 In this connection, we note that the consensus theory \cite{Au1976,GePo1982} is naturally extended with the theory \cite{MKePa1986, NiBrGe1990, FeFoPe2005, ChMuCh2006} that explains why the price in such a market may converge to a common posterior for the proposition in question. Instead of the experts' beliefs being iteratively revealed and updated, the price dynamically aggregates them and, being commonly known, causes them to be dynamically updated. 
Additionally, in \cite{FeFoPe2005, ChMuCh2006} information pooling is studied theoretically in the context where information partitions are finite and commonly known.
 
 Since any aggregate of experts' beliefs cannot contain more information than the beliefs themselves, using a prediction market would not solve the problems described in the previous section. As above, information pooling need not even theoretically be efficient in the situation where the information structure cannot be simultaneously finite and commonly known. In addition, the result would not be interpretable. Moreover, new problems would arise for a conventional prediction market. First, to distribute experts' gains and losses after trading, we need to know the ground truth about the proposition in question. This fact severely limits the applicability of traditional prediction markets, especially for solving scientific problems. Second, there are experimental evidences \cite{MKePa1990,AlKiPf2009} that even in a situation, where efficient information pooling is theoretically possible, this may not be the case in reality. 
 One of the reasons may be the complexity of computations experts need to perform in order to elicit each other's information from the price.

 On the other hand, there are certain trivial situations, where (and only where) the efficiency of a prediction market is virtually guaranteed theoretically and experimentally. We mean the following basic property of prediction markets, which 
 can be considered as the simplest degenerate case of the theory~\cite{MKePa1986, NiBrGe1990}.
 \begin{enumerate}[label = (eff0)]
     \item\label{it:eff0}
     Suppose there are experts whose subjective probabilities of the truth of a certain proposition take the same value, and this fact is 
     a common belief among them. If we arrange, for these experts, a prediction market associated with the proposition in question, its final price will coincide with that value.
\end{enumerate}
This scenario is equivalent to the situation where experts share a common prior and all relevant information about the proposition is common knowledge. Under the assumptions of~\ref{it:eff0}, the persistent price coinciding with the experts' probabilities will be completely consistent with the experts' hierarchies of beliefs. We also note that if the common belief assumption is violated on any level of any hierarchy, then the hierarchies will eventually be updated and the probabilities (first-order beliefs) may change (together with the price).


In the mentioned study~\cite{AlKiPf2009}, property~\ref{it:eff0} has been confirmed experimentally. 
 We take a closer look at that study, where 
 the authors also supported the idea of utilizing play-money prediction markets with real incentives for solving scientific problems. More specifically, their experiments simulated the search for an answer
to the question ``In what order do three genes activate each other?'' Above, we have referred to the volatility of their experimental results in the setting where the information of each expert was private and transmitted
to others through the prices only. In such a setting, information pooling was broken in half the cases despite a finite (but rather complex) and commonly known information structure. On the other hand, in the setting where all the information was public, the final price perfectly approximated the conditional probability given that information in all the cases. The same was true for the third setting where the information was made public gradually by the experimenter. 

Is it possible to solve all the problems discussed above relying only on the simple and robust property~\ref{it:eff0} of prediction markets? Namely, can we arrange a system in which 
\begin{itemize}
\item 
private information would be directly made public by experts themselves (interpretability), 
\item 
we would be under conditions where only public information needs to be processed by the market (independence from the information structure and complex Bayesian calculations on it), and 
\item 
gains and losses would be distributed depending only on the information obtained, not on the ground truth (self-resolvability)? 
\end{itemize}
We demonstrate that such a system can be naturally obtained from a prediction market through minor modifications. 
We also provide a complete formalization of the solution through a theoretical model that is minimally necessary to capture the main ideas.

\section{Solution}\label{sec:solution}
Suppose we are interested in a certain scientific hypothesis~$H$ and would like to inspire its deep collective analysis by a large group~$E$ 
of experts. As the first (optional) step, we can publicly query an LLM for all available information relevant to~$H$ like ``Provide detailed and verifiable arguments for and against the scientific hypothesis [a description of~$H$]''. This step may get us closer to the situation of a common prior~$\pi$ over a certain sample space~$\Omega$ containing $H$ and of a substantial part of experts 
being ignorant, i.e., possessing only the public information provided by the LLM and incorporated in that prior. This also will lead us to a clearer distinction between public and private information. Namely, 
only the information that the experts perceive as private will remain private,
such as their unpublished results or original reasoning.

Next, we can arrange a play-money prediction market that will satisfy the following rules.
\begin{enumerate}[label = \textup{(r\arabic*)}]
    \item\label{it:r1} 
Each 
participant initially gets one and the same limited amount of play money. 
\item\label{it:r2} The trade inactivity of announced duration is a necessary condition for closing the market. 
\item\label{it:r3} 
Winners are determined by a binary random generator that implements Bernoulli distribution with probability parameter equal to the market final price. 
\item\label{it:r4} A public chat is provided for sharing any information directly.  
\end{enumerate}
These four rules are public and commonly known. 
In order to entangle such a market with the hypothesis~$H$, we can make the following public invitation to the experts. 
\begin{enumerate}[label = (inv0)]
\item\label{it:inv}
Trade as if the market would resolve in accordance with the truth of~$H$ and, 
after entering the market, share your private information on~$H$ through the chat in an interpretable and verifiable form. 
\end{enumerate}
It can be shown that such a course of action, publicly proposed to the experts, forms an equilibrium in the sense that if an expert believes 
the others follow invitation~\ref{it:inv} 
commonly believing in the same (i.e., an expert believes that any other expert follow~\ref{it:inv}, and believes that any other expert believes that any other expert follow~\ref{it:inv}, and so on), then he or she is bound, in order to make a profit, also to follow it. 
It is worth noting that such an equilibrium could be described in terms of separate strategies as a Bayesian Nash equilibrium. However, this would only complicate the model, at the same time making it less realistic. The fact is that for the required equilibrium to occur, experts do not need to consider each other's strategies. It is sufficient for each expert to reflect only on the probabilities that other experts assign to the generated binary outcome and on which they base their strategies.
The corresponding formal model will be provided in the final section, while here we discuss 
the main idea underlying it. 

Consider an arbitrary expert $n\in E$ with information~$I_n$ that is the element of his or her partition of $\Omega$ containing the true state of the world. For our basic scenario, 
we make certain simplifying assumptions 
that make information sharing structured and allow us to incorporate it into the formal model.
\begin{enumerate}[label = (shr\arabic*)]
\item\label{it:inf1} We assume that $n$ can publicly share $I_n$ with others in a completely interpretable and verifiable form, thereby making it common knowledge. 
\item\label{it:inf2} We also assume that the expert~$n$ cannot prove to others any false information, and that providing such unverifiable information does not affect others' beliefs and is identified with 
being silent. 
\item\label{it:inf3} Finally, we assume that the information~$I_n$ is indivisible in the sense that it cannot be transmitted and proved to others in parts in such a way that one part would increase the probability of~$H$ and the next part would decrease it, or vice versa. 
\end{enumerate}
Assumption~\ref{it:inf1} means that 
$n$ can reveal his or her partition of~$\Omega$ 
(e.g., describe an experiment he or she has conducted), report that its element~$I_n$ has turned out to be true (e.g., report the results of the experiment), and provide comprehensive evidence of this fact (e.g., provide the entire experimental data). Assumption~\ref{it:inf2} means that false information cannot be verifiable and unverifiable information has zero impact.
Concerning assumption~\ref{it:inf3}, we will see that it can be easily relaxed within the extended framework sketched in the next section, where we will also discuss 
the relaxation of~\ref{it:inf1}, \ref{it:inf2}, and other assumptions.

By $e_1\subseteq E$ we denote the subgroup 
of all ignorant experts, which are the experts who possess only the public information $I_n = \Omega$ (i.e., their knowledge is completely incorporated in the prior~$\pi$). 
We suppose it is common knowledge that such experts exist and that their number suffices to move the price as required anywhere below. We also note that due to~\ref{it:inf2}, ignorant experts will be 
silent: they have no private information that could be shared in verifiable form to become common knowledge.

Furthermore, we do not assume that experts know each other's information structures. Moreover, we assume that they do not speculate on them at all: 
each expert~$n$ considers $e_1\cup\{n\}$ 
to be the only participants in the market, at least 
as long as other experts 
do not enter it. 
In other words, each expert initially believes that there are many ignorant experts and does not go any further in his or her analysis. 
In particular, ignorant experts initially speculate only on the existence and beliefs of other ignorant experts. 
\begin{enumerate}[label = (blf1)]
\item\label{it:sim1}
Among ignorant experts, it is commonly believed that 
they are ignorant, and they do not have any beliefs about informed experts at all. 
\end{enumerate}

As we have noted above, even under experimental conditions where the information structure is common knowledge (but not the information itself), experts who are not specialists in Bayesian calculations often behave as if they do not know that structure at all~\cite{AlKiPf2009}. The unawareness assumption is all the more natural in complex settings where experts initially know nothing about each other. Nevertheless, in the next section we will discuss how this assumption 
can be dropped completely (including~\ref{it:sim1}). 
In addition, throughout this section it can be seen 
that all arguments and conclusions can be left unchanged even if 
we extend~\ref{it:sim1} as follows (cf. the Bayesian shift described in~\cite{Pr2004}: one's belief about someone else's belief on a subject shifts toward one's own belief on that subject).
\begin{enumerate}[label = (blf2)]
    \item\label{it:sim2} Each expert $n \in E$ believes that any other expert in~$E$ may either possess, with a certain small probability, 
    the same information~$I_n$, 
    or be ignorant. The expert~$n$ also believes that the others believe the same in the context of their own information, 
    and so on.  
\end{enumerate}

As long as only the experts from~$e_1$ trade, a desired equilibrium is naturally maintained. 
To see this, consider a certain expert~$n$ in~$e_1$. 
While~$n$ 
believes the others 
follow invitation~\ref{it:inv} commonly believing in the same, he or she 
expects them 
to jointly move the price to $\pi(H)$. 
Indeed, we have 
assumed that $n$ does not speculate 
on
the information of experts outside~$e_1$ 
before they enter the market (see~\ref{it:sim1}). 
Thus, $n$ believes the others will trade as in a prediction market 
where there is no private signals related to an outcome 
and where $\pi(H)$ is the commonly known probability of that outcome. 
Therefore, the basic property~\ref{it:eff0} will make $n$ 
to believe that the price will tend to $\pi(H)$ along 
with the probability parameter of the outcome generator. 
Furthermore, we can expect that~$n$ does not believe it is possible for him or her to distort the result of such concerted trading actions:
rule~\ref{it:r1} implies that several experts have more resources than a single one, rule~\ref{it:r2} implies they always have time to react to the price distortion, and, finally, 
$n$ hardly may expect 
to affect others' beliefs via a price shift. 
Concerning the last, suppose $n$ has shifted the price in a direction opposite to the concerted trading actions of the others. If the others believe that $n$ follows~\ref{it:inv}, then they will expect from him or her verifiable information that causes his or her probability to differ from $\pi(H)$. Since $n$ cannot provide such information, his or her price signal will disagree with~\ref{it:inv} and will be regarded as unverifiable manipulative information, which, by analogy with \ref{it:inf2}, will not affect the beliefs of the others. In general, it is hardly possible for a single expert with limited resources to manipulate the market via transactions, provided the number of participants is large (in the next section, we will discuss collusion and manipulation in more detail). 
We can conclude that  
rule~\ref{it:r3} will cause the expert~$n$ to believe that the probability of the outcome coincides with his or her probability of~$H$ and, as a consequence,  
to follow invitation~\ref{it:inv}.

As long as 
the experts from $e_1$ silently trade as described above, 
their beliefs derived from~\ref{it:sim1} as well as their common belief that everyone is following~\ref{it:inv}  
are 
supported by their observations of each other's behavior: a rational expectation equilibrium occurs.

Next, consider a structured scenario where a large number of ignorant experts 
stabilize the price at the level of~$\pi(H)$ before any actions of informed experts outside~$e_1$. 
Consider an expert $m\in E$ outside $e_1$ who has private information that affects the prior probability: 
$\pi(H\mid I_m)\ne \pi(H)$. 
As before, 
we assume that $m$ believes the others follow~\ref{it:inv} commonly believing in the same, 
and that he or she 
does not speculate on 
the existence of other informed experts  
before they enter the market. His or her belief that the current market participants are ignorant and follow~\ref{it:inv} is supported by their behavior. 
Thus, again, $m$ believes that
the current stable price~$\pi(H)$ is consensus for other experts, and that it is impossible, due to rules~\ref{it:r1} and~\ref{it:r2}, to distort it with his or her trading actions. Therefore, due to \ref{it:inf1}--\ref{it:inf3}, there exists the only course of action that $m$ believes to be profitable, provided he or she is risk neutral: to make a transaction at the current price~$\pi(H)$ and, after that, to share and prove the information~$I_m$, making it common knowledge. Indeed, by doing so, $m$ would recursively switch the system to the state 
with
\begin{itemize}
\item
$e_2 \df e_1\cup\{m\}$ instead of $e_1$,
\item 
$I_m$ instead of $\Omega$,
\item 
$\pi(\,\cdot\mid I_m)$ instead of $\pi$, and
\item 
$I_n \cap I_m$ instead of each $I_n$.
\end{itemize}
Assumption~\ref{it:sim1} will be also extended 
to this new state 
in the context of the updated public information~$I_m$ and the updated set~$e_2$ of ignorant experts: since it will be commonly believed that $m$ has followed~\ref{it:inv}, it will be commonly believed that $m$ has shared all of his or her information and has become ignorant. We also note that \ref{it:sim1} will encompass beliefs of~$m$ about ignorant experts. 

On the other hand, if $m$ enters the market without sharing information, the experts from $e_1$ will also extend their beliefs onto him or her: 
they will consider $m$ ignorant, which will be consistent with his or her silence and their belief that he or she is following~\ref{it:inv}. The expert~$m$ does not believe that under such conditions he or she will be able to change the price and make a profit (for the same reasons that give rise to the initial equilibrium among ignorant experts).

The above considerations imply 
that $m$ believes that the probability of the final outcome can be switched to  
the value $\pi(H\mid I_m)$, which differs from the current price~$\pi(H)$. Thus, $m$ believes that he or she can make a profit and that there is the only way to do it. 

It is not hard to see that even if $m$ admits that 
another expert may, with small probability, possess the same information~$I_m$ (see~\ref{it:sim2}), it is still profitable for $m$ to act as described above. In particular, $m$ 
believes that after he or she shares the information, all experts will become ignorant in the context of the updated public information and will stabilize the price at the level of~$\pi(H\mid I_m)$.

Everything that has been said about $m$ is true for any other expert whose private information is inconsistent with the prior probability. 
Regardless of which one of them manages to be the first to leverage
his or her informational advantage,
the system will be switched to a new state, similar to the initial one, but 
with updated public information. 

Above, we have also assumed the following (cf. the infinite population assumption in~\cite{Pr2004}). 
\begin{enumerate}[label = (eff1)]
\item\label{it:inst}
A number of ignorant experts is large, and in the equilibrium described above, they quickly balance the price in accordance with 
public information: 
they manage to do this before any informed expert enters the market. 
\end{enumerate}
This will be true, for example, if informed experts enter the market at intervals, or 
the market satisfy the semi-strong efficient market hypothesis that the price instantly reflects all the public information (ignorant experts instantly implement~\ref{it:eff0}). The assumption~\ref{it:inst} allows us to make the formal model parsimonious, but it can be seen that its violation will not affect the system results. We discuss the relaxation of this and other assumptions in the next section.

Reasoning recursively under assumptions made throughout this section, 
we arrive at the following result, whose complete formalization 
will be presented in the final section.
\begin{enumerate}[label = (T1)]
    \item\label{it:T1} Consider the system with rules \ref{it:r1}--\ref{it:r4}. 
    If it is commonly believed that the others follow~\ref{it:inv}, 
    then all the experts will indeed follow~\ref{it:inv}. The absence of trading activity will mean that there remain no experts whose private information 
    make his or her probability of~$H$ 
    diverged from the current price.
\end{enumerate}

Now we discuss information pooling. Again, we are going to simplify the situation in some way, but we will discuss the general case in the next section.

We say that information $I_n$ is \emph{a direct argument} \emph{for} or \emph{against} the hypothesis~$H$ if $I_n$ forms a pair of nested sets with~$H$ or~$\neg H$, respectively. For example, $I_n = \Omega$ is a degenerate argument, while 
information $I_n \subseteq H$ or $I_n \subseteq \neg H$ is a conclusive proof of $H$ or $\neg H$, respectively. 
In other cases, where either $I_n \supset H$ or $I_n \supset \neg H$, a direct argument 
rules out some variants of realization of either~$\neg H$ or~$H$, respectively.  

If all private information consists of direct arguments, this will remain true after each information-sharing episode. It is not hard to see, that this fact yields the following result, whose rigorous formulation and proof can be found in the final section.
\begin{enumerate}[label = (T2)]
\item\label{it:T2} If each $I_n$ is a direct argument {for} or {against} the hypothesis~$H$, then under the conditions of~\ref{it:T1}, at least one of the following two conclusions is true: 
\begin{itemize}
\item
one of the hypotheses $H$ or $\neg H$ will be proven publicly and conclusively, 
\item
all arguments~$I_n$ will become public. 
\end{itemize}
\end{enumerate}


Now we recall our introductory example~\ref{it:ex} and consider applying our system to it, under the assumptions made throughout this section. 
We immediately get into the conditions of~\ref{it:T2}. Thus, both arguments $\neg A$ and $\neg B$ will become public and the final price will coincide with $\pi(H\mid \neg A \cap \neg B)$.

In the next section, we will also discuss why the condition of~\ref{it:T2} is redundant (in addition to the lack of necessity for \ref{it:sim1} or \ref{it:sim2}, as well as for other assumptions).

\section{Robustness}
Here we provide strong, but not completely rigorous arguments 
that 
the violation of 
the assumptions we have made in the previous section will not affect the system results. 
A complete formalization of the considerations below could be the subject of future research.

We start with assumption~\ref{it:inst}. 
Suppose it takes some time for the price to stabilize.
For any informed expert who is 
observing the market, 
it is profitable to start trading as ignorant one 
and, after the price stabilization, 
to leverage his or her informational advantage as described in the previous section. This is true regardless of the relative order of the following three quantities: 
\begin{enumerate}[label = (q\arabic*)]
\item\label{it:q3} the starting price, 
\item\label{it:q2} the probability of $H$ conditioned on the current public information, and
\item\label{it:q1} the informed expert's probability of $H$. 
\end{enumerate}
Indeed, the informed expert believes that if \ref{it:q2} does not lie between \ref{it:q3} and~\ref{it:q1}, then it is possible to increase the length of the price trajectory by allowing the price to first reach \ref{it:q2} and then sending it to \ref{it:q1}, by sharing his or her information. By making transactions in accordance with such price movements, the expert will be able to maximize profits. 
In any case, the price passes, as before, through quantity~\ref{it:q2}, quantity~\ref{it:q1} iteratively becomes quantity~\ref{it:q2} after the 
information-sharing episode, and so forth: we arrive at the same results.
Moreover, such behavior of experts is an equilibrium: if it is commonly believed that during a round between information-sharing episodes, all experts behave as ignorant ones until the price stabilizes, then it is disadvantageous to deviate from such behavior. 

In addition, we can drop our assumptions on experts' beliefs about each other's information structures over~$\Omega$ 
(including assumption~\ref{it:sim1}). 
We only need to assume that it is commonly believed that the number of ignorant experts suffices to ensure that their concerted trading actions in the sense of~\ref{it:eff0} cannot be distorted by others' trading actions (such actions can be assumed not to be personified). 
This guarantees that no matter how experts speculate about others' information structures, they will believe that after the price stabilizes at the probability conditioned on the current public information, it can only be shifted through sharing their private information, which makes such actions profitable. We only need to justify that the equilibrium among ignorant experts will be maintained. But it is not hard to see that no matter how an ignorant expert speculates on others' information structures, his or her probability of $H$ varying due to information exchange will be believed to be a martingale, regardless of his or her beliefs about the order in which informed experts will share information. 
This implies that if such an expert believes that in each round between information sharing episodes, the other ignorant experts are under the conditions of~\ref{it:eff0} with respect to the probability of $H$ conditioned on the current public information, then he or she will also believe that the price at stability points will be a martingale starting at $\pi(H)$. This means that his or her probability of the generated outcome will always coincide with the probability of $H$, conditioned on the current public information. Thus, the required equilibrium occurs. 

We note that all our formal assumptions about hierarchies of beliefs are aimed at modeling a natural situation where an expert with significant information assumes that due to limited resources and the expected privacy of this information, he or she, as well as someone else with similar information, will not be able to incorporate it into the price solely through trading. Below, we will extend this situation with information disseminated among experts. 

It is also worth noting that in the extended settings under discussion, 
we can 
relax~\ref{it:inf3}, provided we will assume that any expert possesses a finite number of indivisible units of information. Such an expert can use these units one after another, maximizing the length of the price trajectory, but, as before, he or she will stop sharing information only when his or her probability of $H$ 
becomes consistent with the price at stability points. 

In addition, assuming that only some of the information units are transferable and verifiable, we get a basis for relaxing~\ref{it:inf1}. We will return to this point below.


Next, we can bring invitation~\ref{it:inv} into line with the new extended equilibrium. 
\begin{enumerate}[label = (inv1)]
\item\label{it:inv1}
Trade as if the probability of the outcome would coincide with the probability of~$H$ conditioned on the current public information. 
If the price becomes fair with regard to this probability, you may 
disclose any part of your information simultaneously with making a transaction that would be 
determined by the probability of~$H$ conditioned on the updated public information. 
\end{enumerate}
The behavior described in~\ref{it:inv1} is indistinguishable from the one described in~\ref{it:inv} only under assumptions~\ref{it:inf3} and~\ref{it:inst}. 
In these sense, \ref{it:inv1} extends~\ref{it:inv}. 
However, even if we leave the more straightforward invitation~\ref{it:inv} unchanged and there occur 
experts with private information who follow it literally, we can expect that the system will not cease to be effective. Indeed, if such an expert enters the market and shares his or her information without waiting for the market to process the current public information, he or she will only accelerate the recursively switching of the system to a new state with updated public information. 

Now we discuss how it may be possible to relax assumption~\ref{it:inf1} together with the assumption that there are many experts with only public information. Suppose $\pi$ is a common prior only over an extended space $\Omega \times \Gamma$. Here, information of each expert over $\Gamma$ is his or her background knowledge that is disseminated among experts and difficult to distinguish from 
public information. In particular, if elements of $\Gamma$ are represented as arrays of bits, then we can assume that each such bit is known to many experts at once. Knowledge over $\Gamma$ could be responsible for the correct interpretation of information over~$\Omega$. For example, this could be knowledge of how to correctly perform statistical and Bayesian calculations. We assume that information over $\Gamma$ is not directly transferable (including because experts cannot distinguish it from public information). By ignorant experts, we continue to mean experts who possess information of the form $I_n = \Omega \times \gamma$, where $\gamma \subseteq\Gamma$. Thus, ``ignorant'' does not mean ``uneducated'' in our terminology. 
It can be expected that, in the equilibrium, experts will, first, naturally 
rely on their background knowledge over $\Gamma$, and second, try to capture prevalent homogeneous price signals generated by the background knowledge of other experts: such behavior will be profitable provided it is commonly believed that others are doing the same. 
Thus, information that experts subjectively consider to be private 
and non-distributable through price signals will be distributed directly through the chat, while the rest of the information will be distributed through the market mechanism. The fact that such a mechanism works for background knowledge complements property~\ref{it:eff0} and has been empirically verified in the aforementioned article~\cite{AlKiPf2009} together with~\ref{it:eff0} itself: the experts who participated in that experiment could have different levels of proficiency in the mathematical apparatus necessary to interpret the simulated results of gene activation tests. 

Concerning \ref{it:T2}, we can impose more subtle conditions on the relationship between experts' information and the hypothesis~$H$ that will guarantee information pooling regardless of the probability measure. For example, we can expect that if the outcomes in $\Omega$ are represented as arrays of bits, and each expert knows a finite number of bits of the real state of the world, then such a condition could be the expressibility of the hypothesis $H$ through a threshold Boolean function on $\Omega$ as in~\cite{FeFoPe2005, ChMuCh2006}. More generally, 
the following situation seems rare: the remaining private information can change the probability of~$H$, but it is divided among the experts in such a way that each part 
does not change this probability. We can expect that the rarity of such a situation can be formalized similarly to how the rarity of a discrepancy between the direct and indirect communication equilibria was formalized in~\cite{GePo1982}.


As for the possibility of manipulations, 
consider an extreme situation where all players are coordinated by a single center. In the worst case, this is equivalent to a situation where a single player with combined resources plays against an automatic market maker in a conventional prediction market, with guaranteed completion of trading at a fair price, matching the frequentist probability of the event in question. The more experts participate in the system, the less likely such a situation is to occur in practice, and its consequences will depend on the properties of the market maker, whose potential losses are limited in any case~\cite{OtSa2012}. 
In general, we can expect that our system will be no less robust to manipulative trading attacks than conventional prediction markets~\cite{HaOpPo2006,JiSa2012} are. Moreover, due to interpretability, manipulative transactions without supporting information will have little effect, and after the trade is completed, it will be possible to verify the consistency between the final price and the final public information. Concerning manipulations through false information, we note that even when \ref{it:inf2} is violated and such a manipulation affects the price, it becomes profitable for everyone (including the manipulator) to reliably negate the manipulative information if possible. This allows us to expect that \ref{it:inf2} can also be relaxed. 

\begin{description}
    \item[Conclusion] 
    It can be expected that property~\ref{it:eff0} is all we need to entangle our system with any scientific hypothesis and to achieve information pooling.   
\end{description}

\section{Formalization} 
\label{sec:model}
In this section, we provide a complete formalization of our minimal model.

Consider a probability space
$
(\Omega, \mathcal{F}, \pi),
$
where $\Omega$ represents possible states of the world and $\pi$ represents the common prior of the experts from~$E$.
Let $\mathcal{I}_n\subseteq \mathcal{F}$ be partitions of $\Omega$ that represent the private information of the corresponding experts $n\in E$. Suppose $\omega_0\in\Omega$ is the true state of the world.
We do \emph{not} assume that experts from $E$ are initially aware of each other's information partitions or even of each other's existence. 

In what follows, only the elements $I_n \in \mathcal{I}_n$ such that $I_n \ni w_0$ will be involved in the information exchange between experts, and we will refer to them as the private information of experts 
(instead of~$\mathcal{I}_n$).
We also make the standard technical assumption that $\pi(\cap_{n\in E}\, I_n) \ne 0$. 
Next, we introduce the set
$$
e_1\df \{n \in E \mid I_n = \Omega\} 
$$
of the experts who initially possess only the public information, which is beyond~$\Omega$ and is not explicitly presented. 


In our theoretical model, we regularize wagering and information exchange between the experts as follows. 
We assume that experts in $e_1$ 
generate an initial prediction market. We consider their trading as the first round that results in a price $\xi_{1}$. 
After the first round, some new expert $n \in E\setminus e_1$ may enter the market by making a transaction at the price~$\xi_1$. After that this expert~$n$ may (or may not) share his or her private information 
as it is described 
just below. 
Next, the second round begins, and the experts in 
$e_{2} = e_1\cup\{n\}$ are trading with each other. It results in a price $\xi_{2}$ and so forth: the process continues until there remain no experts willing to enter the market. Let $k_\infty$ be the number of the final round, which obviously exists because~$E$ is finite.
Thus, we finally have the sequences
$$
e_1 \subset \dots \subset e_{k_\infty} \subseteq E,\quad |e_{k+1}\setminus e_k|=1,
$$
and
$$\xi_1,\dots,\xi_{k_\infty} \in \Xi \df [0,1].$$
After the final round~$k_\infty$, we produce 
a value
$$
    \theta \in \Theta \df \{0,1\}
$$
by means of a binary random generator with the probability parameter~$\xi_{k_\infty}$. We resolve the market \textup{(}distribute gains and losses\textup{)} in accordance with 
    $\theta$: each contract brings~$1$~unit if $\theta = 1$ and nothing if $\theta = 0$. 
    We introduce the sample space 
    $$\Lambda \df \Xi\times \Theta,$$
where each element $(\xi,\theta) \in \Lambda$ represents a 
\emph{final} price together with a \emph{final} result produced from this price by a random generator.

Next, we introduce a notation for information exchange between experts (under assumptions \ref{it:inf1}--\ref{it:inf3}). 
The initial public data is $\Omega_1 \df \Omega$. 
As described above, if 
after a 
round~$k$ 
a new expert $n\in E\setminus e_k$ enter the market by making a transaction at the price $\xi_k$, then after that he or she can choose whether to remain silent, or to share his or her information.
The choice of silence leads to
\begin{equation}\label{eq:silence}
        \Omega_{k+1} \df \Omega_k,
\end{equation}
whereas the choice of sharing leads to
    \begin{equation}\label{eq:sharing}
        \Omega_{k+1} \df I_n^k, 
    \end{equation}
where 
$$
I^k_n \df \Omega_k \cap I_n.
$$

Thus, we have two separate mechanisms: wagering 
with its final outcome from~$\Lambda$, and information exchange over 
$\Omega$. Our goal is to show how such mechanisms can be entangled 
with each other in order to get
the probability of a hypothesis
$H\subseteq \Omega$ and to collect the relevant information interpreting that probability. 

Our subsequent formal assumptions rely inter alia on the idea that during a round~$k$ an expert $n\in E$ 
may know nothing
about the experts outside $e_k\cup\{n\}$, i.e.,
about their information 
(and its structure), about their intentions to enter the market, or even about their existence: we assume that 
$n$ considers only ignorant experts, the current participants, and himself or herself as real players. In particular, the experts $n\in e_k$ consider the round~$k$ as final.

The following 
assumption~\ref{Asmp:mu} means that {at the beginning} of a round~$k$, the experts in $e_k$ have hierarchies of beliefs over $\Theta$ as well as first-order beliefs over~$\Lambda$, agreed with these hierarchies. 
In accordance with the idea that experts in~$e_k$ do not know much about undiscovered individuals outside~$e_k$, their hierarchies of beliefs apply only to~$e_k$ itself.
We introduce them through Harsanyi's types \cite{Ha1967,MeZa1985}. 
Following~\cite{MeZa1985}, we denote by $\Pi(X)$ the compact space of probability
measures on a compact space $X$. 
\begin{Asmp}\label{Asmp:mu} For~$\Theta$ and the players in~$e_k$, we introduce 
the universal type space $T_k$ with the corresponding homeomorphism 
$$h_k\colon T_k \to \Pi\bigl(\Theta \times T_k^{|e_k|-1}\bigr).$$
We assume that for each $n \in e_k$, we can assign a type $t^k_n \in T_k$ and a belief $\mu_n^k \in \Pi(\Lambda)$
such that
\begin{equation}\label{eq:marg}
\marg_\Theta \mu_n^k = \marg_\Theta h_k\big(t^k_n\big)
\end{equation}

\end{Asmp} 
The next 
assumption 
means that experts know~\ref{it:r3}. 
\begin{Asmp}\label{Asmp:coh}
    Suppose $n\in e_k$ and $\rho \in [0,1]$.
    If 
    $\mu_n^k(\xi = \rho) \ne 0,$ then
    $$\mu_n^k(\theta = 1 \mid \xi = \rho) = \rho.$$
\end{Asmp}
Hereinafter, we often use the same symbol to refer to a measure and to its marginals. In particular, by $\mu_n^k(\xi = \rho)$ and $\mu_n^k(\theta = 1)$ we mean 
$$\marg_{\Xi}\mu_n^k\bigl(\{\rho\}\bigr) = \mu_n^k\bigl(\{\rho\}\times\Theta\bigr) 
\quad\mbox{and}
\quad \marg_{\Theta}\mu_n^k\bigl(\{1\}\bigr) = \mu_n^k\bigl(\Xi\times\{1\}\bigr),
$$ 
respectively.

\begin{Def}\label{Def:type_rho}
    Suppose $n\in e_k$ and $\rho \in [0,1]$. We define the type $\langle\rho\rangle_k \in T_k$ as follows. We say that $t_n^k = \langle\rho\rangle_k$ if
    $$
    h_k\bigl(t_n^k\bigr)(\theta = 1) = \rho
    $$
    and
    \begin{equation}\label{eq:conc}
        h_k\bigl(t_n^k\bigr)\bigl(t_m^k = \langle\rho\rangle_k \;\;\forall\; m \in e_k\setminus\{n\}\bigr) = 1.
    \end{equation}
\end{Def}
\begin{Def}\label{Def:type_rho_comm}
    If the type $t_n^k$ of an expert $n \in e_k$ satisfies~\eqref{eq:conc} for some $\rho \in [0,1]$, then we say that 
    $n$ believes 
    the others in $e_k$ have the type~$\langle\rho\rangle_k$.
\end{Def}
The next model 
assumption is the basic property~\ref{it:eff0}. 
\begin{Asmp}\label{Asmp:same_prob}
    Let $\rho \in [0,1]$. If $t_n^k = \langle\rho\rangle_k$ for all experts $n\in e_k$, then $\xi_k = \rho$.
\end{Asmp}


Next, we formalize the assumption we justified in Section~\ref{sec:solution} 
based on~\ref{it:r1}, \ref{it:r2}, and~\ref{it:inf2} (or 
merely on the assumption that the number of market participants is large): if at the beginning of round~$k$, an expert $m\in e_k$ believes that all the other current participants satisfy the premise of Assumption~\ref{Asmp:same_prob} for some $\rho$, then $m$ believes that he or she will not be able to resist others' concerted actions and the round~$k$, which he or she believes to be final, will result in the price~$\rho$.
\begin{Asmp}\label{Asmp:rbst}
    Suppose $n\in e_k$ and $\rho \in [0,1]$. If $n$ believes the others in $e_k$ have 
    the type~$\langle\rho\rangle_k$, 
    then $\mu_n^k(\xi = \rho) = 1$.    
\end{Asmp}

The final assumption~\ref{Asmp:joining} consists of two theses. First, a new expert enters the market only if he or she believes that the corresponding transaction yields a positive expected profit, i.e., that the current price is unfair. Second, an expert who believes that the current price is unfair, is willing to enter the market. In order to formalize the second thesis, 
we 
implicitly assume than $n$'s beliefs on $\Lambda$ that guide 
his or her decision to join~$e_k$, 
will be identified with $\mu_n^{k+1}$ in a situation where 
he or she actually joins~$e_k$ and $e_{k+1} = e_k\cup\{n\} $. 
\begin{Asmp}\label{Asmp:joining}
\leavevmode 
    \begin{enumerate}
        \item\label{it:joining1} If $e_{k+1} = e_k\cup\{n\}$, then 
        \begin{equation}\label{eq:can_gain}
        \mu_n^{k+1}(\theta = 1) \ne \xi_k.
        \end{equation}
        \item\label{it:joining2} If there exists an expert $n \in E\setminus e_k$, 
        for whom joining~$e_k$ and choosing one of alternatives~\eqref{eq:silence} or~\eqref{eq:sharing} 
        would necessarily lead to \eqref{eq:can_gain}, then 
        $k\ne k_\infty$. 
    \end{enumerate}
\end{Asmp}
We note that the model does not specify the order in which experts enter
the market, but only assumes that 
trading cannot stop if there is an expert who
believes that he or she can make a profitable transaction.


Suppose a state sequence
$$
\mathcal{M} \df \bigl\{(e_k,\xi_k,\Omega_k)\bigr\}_{k=1}^{k_\infty}
$$
satisfies Assumptions~\ref{Asmp:mu}--\ref{Asmp:joining}. Suppose   
$H\subseteq \Omega$. 
We can formally describe a situation, where everyone follows~\ref{it:inv}, this is common knowledge, and, as a consequence, \ref{it:sim1} is maintained throughout all rounds.
\begin{Def}
We say that $\mathcal{M}$ 
is 
entangled with~$H$ if the following are true for any $k = 1,\dots,k_\infty$. 
\begin{enumerate}[label = \textup{(ent\arabic*)}]
    \item\label{it:eff1} If $n \in e_k$, then 
    $I_n^k = \Omega_k$.
    \item\label{it:eff2} If $n \in e_k$, then 
    $t_n^k = \langle\pi(H \mid \Omega_k)\rangle_k.$
    \item\label{it:eff3}
    \begin{enumerate}
    \item\label{it:eff3_1} If $e_{k+1} = e_k\cup\{n\}$, then 
        \begin{equation}\label{eq:can_gain_pi}
        \pi\bigl(H \bigm| I_n^k\bigr) \ne \xi_k.
        \end{equation}
    \item\label{it:eff3_2} If $n \in E\setminus e_k$ and \eqref{eq:can_gain_pi} holds, then $k \ne k_\infty$.
    \end{enumerate}
\end{enumerate}
\end{Def}
It is easily seen that the final state in such $\mathcal{M}$ is described as follows.
\begin{Fact}\label{Fact:final_state}
    If $\mathcal{M}$ 
    is entangled with 
    $H$, then
    \begin{equation}\label{eq:final_state_inf}
         \Omega_{k_\infty} = \bigcap_{n\in e_{k_\infty}} I_n
    \end{equation}
    and
    \begin{equation}\label{eq:final_state_prob}
    \xi_{k_\infty} = \pi\bigl(H \bigm| \Omega_{k_\infty}\bigr) = \pi\bigl(H \bigm| I_n^{k_\infty}\bigr),\quad n \in E.
    \end{equation}
\end{Fact}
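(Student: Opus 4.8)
The plan is to run one induction on the round index $k$, simultaneously tracking the public data $\Omega_k$ and the price $\xi_k$, and then to read off the claimed final state at $k=k_\infty$. I would begin by establishing the price identity $\xi_k=\pi(H\mid\Omega_k)$ for every $k$. Indeed, by~\ref{it:eff2} each expert $n\in e_k$ has type $t_n^k=\langle\pi(H\mid\Omega_k)\rangle_k$, so Assumption~\ref{Asmp:same_prob}, applied with $\rho=\pi(H\mid\Omega_k)$, forces $\xi_k=\pi(H\mid\Omega_k)$. Here I would record that every conditional probability in sight is well defined: $\Omega_k\supseteq\bigcap_{n\in E}I_n$, which is nonnull by the standing technical assumption $\pi(\bigcap_{n\in E}I_n)\ne 0$, and the same remark covers $I_n^{k_\infty}=\Omega_{k_\infty}\cap I_n$ treated below.

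Next I would prove, by induction on $k$, that $\Omega_k=\bigcap_{n\in e_k}I_n$. The base case $k=1$ holds because every $n\in e_1$ satisfies $I_n=\Omega$, so the intersection equals $\Omega=\Omega_1$. For the inductive step, suppose $m$ joins, forming $e_{k+1}=e_k\cup\{m\}$. The crux is to show that $m$ \emph{shares} rather than stays silent. By~\ref{it:eff1} applied at round $k+1$ to $m$ we have $I_m^{k+1}=\Omega_{k+1}$, that is $\Omega_{k+1}\subseteq I_m$. Were $m$ silent, then $\Omega_{k+1}=\Omega_k$ by~\eqref{eq:silence}, forcing $\Omega_k\subseteq I_m$, i.e.\ $I_m^k=\Omega_k$ and hence $\pi(H\mid I_m^k)=\pi(H\mid\Omega_k)=\xi_k$, contradicting~\eqref{eq:can_gain_pi} in~\ref{it:eff3_1}. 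Thus $m$ shares, and~\eqref{eq:sharing} gives $\Omega_{k+1}=\Omega_k\cap I_m=\left(\bigcap_{n\in e_k}I_n\right)\cap I_m=\bigcap_{n\in e_{k+1}}I_n$, closing the induction and yielding~\eqref{eq:final_state_inf} at $k=k_\infty$.

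Finally I would combine the two facts to obtain~\eqref{eq:final_state_prob}. The equality $\xi_{k_\infty}=\pi(H\mid\Omega_{k_\infty})$ is the $k=k_\infty$ instance of the price identity. For $n\in e_{k_\infty}$, clause~\ref{it:eff1} gives $I_n^{k_\infty}=\Omega_{k_\infty}$, so $\pi(H\mid I_n^{k_\infty})=\pi(H\mid\Omega_{k_\infty})$ holds trivially. For $n\in E\setminus e_{k_\infty}$ I would invoke~\ref{it:eff3_2} at $k=k_\infty$: since $k_\infty$ is the final round, its contrapositive says that~\eqref{eq:can_gain_pi} must fail, i.e.\ $\pi(H\mid I_n^{k_\infty})=\xi_{k_\infty}=\pi(H\mid\Omega_{k_\infty})$. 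The two cases together give the desired equalities for all $n\in E$.

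I expect the only genuinely delicate point to be the \emph{sharing-is-forced} step inside the induction: everything else is a direct substitution, but that step is what couples the definitional clause~\ref{it:eff1} to the dynamics~\eqref{eq:silence}--\eqref{eq:sharing} and thereby turns the recursion $\Omega_{k+1}=\Omega_k\cap I_m$ into a theorem rather than an assumption. Once that link is secured, the Fact follows mechanically.
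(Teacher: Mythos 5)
Your proof is correct, and the parts dealing with the price identities coincide with the paper's: $\xi_k=\pi(H\mid\Omega_k)$ from \ref{it:eff2} plus Assumption~\ref{Asmp:same_prob}, and the case split over $n\in e_{k_\infty}$ (via \ref{it:eff1}) versus $n\in E\setminus e_{k_\infty}$ (via the contrapositive of item~\ref{it:eff3_2}) is exactly what the paper does for the second identity in~\eqref{eq:final_state_prob}. Where you diverge is~\eqref{eq:final_state_inf}: the paper gets it in one step at the final round, writing $\Omega_{k_\infty}=\bigcap_{n\in e_{k_\infty}}I_n^{k_\infty}=\Omega_{k_\infty}\cap\bigcap_{n\in e_{k_\infty}}I_n$ directly from \ref{it:eff1} to obtain one inclusion, and invoking the construction of the $\Omega_k$ (each step intersects with some $I_n$, $n\in e_{k_\infty}$) for the reverse inclusion. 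You instead run an induction over rounds proving the stronger invariant $\Omega_k=\bigcap_{n\in e_k}I_n$ for every $k$, anchored on a ``sharing-is-forced'' lemma. That lemma is correct, but note two things. First, it essentially replays an argument the paper places in the proof of the Theorem formalizing~\ref{it:T1} (where silence is ruled out to conclude $\Omega_k=I_n^{k-1}$), so you are re-deriving inside the Fact something the paper factors into the surrounding Theorem. Second, it is not actually needed for your induction: in the silence case, \ref{it:eff1} at round $k+1$ still gives $\Omega_{k+1}\subseteq I_m$, hence $\Omega_{k+1}=\Omega_k=\Omega_k\cap I_m$, so the recursion $\Omega_{k+1}=\Omega_k\cap I_m$ holds in either branch and the contradiction is a detour. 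Your route buys the extra conclusions that every entrant must share and that the intersection formula holds round by round; the paper's route is shorter and isolates the Fact as a purely static consequence of the entanglement clauses.
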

\begin{proof}
Due to~\ref{it:eff1}, we have
$$
\Omega_{k_\infty} = \bigcap_{n \in e_{k_\infty}} I_n^{k_\infty} = \Omega_{k_\infty}\cap\bigcap_{n \in e_{k_\infty}} I_n.
$$
Thus, we have
    $$
         \Omega_{k_\infty} \subseteq \bigcap_{n\in e_{k_\infty}} I_n.
    $$
The reverse inclusion is always true by the construction of $\Omega_{k_\infty}$.

The first identity in~\eqref{eq:final_state_prob} follows from~\ref{it:eff2} and Assumption~\ref{Asmp:same_prob}. The second identity in~\eqref{eq:final_state_prob} 
follows from~\ref{it:eff1} for $n\in e_{k_\infty}$ and from item~\ref{it:eff3_2} of~\ref{it:eff3} for $n \in E\setminus e_{k_\infty}$. 
\end{proof}

Now we are ready to formulate and prove the theorem that, together with Fact~\ref{Fact:final_state}, is a formalization of~\ref{it:T1}. Its condition combines belief~\ref{it:sim1} with the common belief that the
others follow~\ref{it:inv} (in particular, sharing their information when entering the market).
Its conclusion means that all beliefs turn out to be true: a rational expectation equilibrium occurs.
\begin{Th}[formalization of~\ref{it:T1}]
Suppose that whenever $n\in e_k$, $n$ believes the others in $e_k$ have 
the type $\langle\pi(H \mid \Omega_k)\rangle_k$. 
Then $\mathcal{M}$ 
is 
entangled with~$H$.
\end{Th}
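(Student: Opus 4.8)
The plan is to verify the three defining conditions \ref{it:eff1}, \ref{it:eff2}, \ref{it:eff3} of entanglement for every round $k=1,\dots,k_\infty$. I would first dispatch \ref{it:eff2}, which requires no induction and from which the price identity $\xi_k=\pi(H\mid\Omega_k)$ immediately follows; I would then establish \ref{it:eff1} together with item~\ref{it:eff3_1} of \ref{it:eff3} by a single induction on $k$, and finally read off item~\ref{it:eff3_2} from the joining assumption.

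For \ref{it:eff2}, fix $n\in e_k$ and write $\rho=\pi(H\mid\Omega_k)$. The hypothesis says exactly that $t_n^k$ satisfies \eqref{eq:conc} for this $\rho$ (Definition~\ref{Def:type_rho_comm}), so it remains only to check $h_k(t_n^k)(\theta=1)=\rho$. Here I chain the assumptions: Assumption~\ref{Asmp:rbst} gives $\mu_n^k(\xi=\rho)=1$, then Assumption~\ref{Asmp:coh} gives $\mu_n^k(\theta=1\mid\xi=\rho)=\rho$, so that $\mu_n^k(\theta=1)=\rho$ since all mass sits on $\xi=\rho$; finally the marginal-matching identity \eqref{eq:marg} transfers this to $h_k(t_n^k)(\theta=1)=\rho$. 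By Definition~\ref{Def:type_rho} this yields $t_n^k=\langle\rho\rangle_k$, i.e. \ref{it:eff2}. Since this holds for all $n\in e_k$, Assumption~\ref{Asmp:same_prob} forces $\xi_k=\rho=\pi(H\mid\Omega_k)$, an identity I will use freely below.

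Next I would prove \ref{it:eff1} by induction on $k$. The base case is immediate: members of $e_1$ have $I_n=\Omega=\Omega_1$, so $I_n^1=\Omega_1$. For the step, suppose \ref{it:eff1} holds at round $k$ and that $e_{k+1}=e_k\cup\{n\}$. The key is to rule out the silence branch \eqref{eq:silence}. If $n$ were silent then $\Omega_{k+1}=\Omega_k$, so the computation of the previous paragraph applied at round $k+1$ (the entrant's guiding belief $\mu_n^{k+1}$ being governed by the same hypothesis) gives $\mu_n^{k+1}(\theta=1)=\pi(H\mid\Omega_{k+1})=\pi(H\mid\Omega_k)=\xi_k$, contradicting the joining inequality \eqref{eq:can_gain} of Assumption~\ref{Asmp:joining}. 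Hence $n$ chooses sharing \eqref{eq:sharing}, so $\Omega_{k+1}=I_n^k$; for the entrant $I_n^{k+1}=\Omega_{k+1}\cap I_n=\Omega_{k+1}$, and for each old $m\in e_k$ the induction hypothesis $\Omega_k\subseteq I_m$ gives $\Omega_{k+1}\subseteq\Omega_k\subseteq I_m$, whence $I_m^{k+1}=\Omega_{k+1}$. This closes the induction. The very same computation now delivers item~\ref{it:eff3_1}: with $\Omega_{k+1}=I_n^k$ we get $\pi(H\mid I_n^k)=\mu_n^{k+1}(\theta=1)\ne\xi_k$ by \eqref{eq:can_gain}.

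Finally, for item~\ref{it:eff3_2} I would appeal to part~\ref{it:joining2} of Assumption~\ref{Asmp:joining}: given $n\in E\setminus e_k$ with $\pi(H\mid I_n^k)\ne\xi_k$, I exhibit the profitable move \enquote{join and share}, under which $\Omega_{k+1}=I_n^k$ and the guiding belief satisfies $\mu_n^{k+1}(\theta=1)=\pi(H\mid I_n^k)\ne\xi_k$, i.e. \eqref{eq:can_gain} necessarily holds; the assumption then forbids $k=k_\infty$. I expect the main obstacle to be precisely the repeated passage from the theorem's hypothesis, stated for the members of $e_k$, to the guiding beliefs $\mu_n^{k+1}$ of \emph{entrants}, and in item~\ref{it:eff3_2} to those of a merely \emph{counterfactual} entrant. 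One must argue that the structural belief hypothesis, together with the implicit identification in Assumption~\ref{Asmp:joining} of an entrant's decision-guiding belief with $\mu_n^{k+1}$, legitimately determines $\mu_n^{k+1}(\theta=1)=\pi(H\mid\Omega_{k+1})$ for the would-be state $e_{k+1}=e_k\cup\{n\}$; everything else is a routine chaining of Assumptions~\ref{Asmp:rbst} and~\ref{Asmp:coh} with the marginal identity \eqref{eq:marg}.
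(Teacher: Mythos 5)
Your proposal is correct and follows essentially the same route as the paper's proof: first derive $\mu_n^k(\theta=1)=\pi(H\mid\Omega_k)$ by chaining Assumptions~\ref{Asmp:rbst} and~\ref{Asmp:coh} with~\eqref{eq:marg} to get~\ref{it:eff2} and the price identity via Assumption~\ref{Asmp:same_prob}, then rule out the silence branch for each entrant to obtain~\ref{it:eff1} and item~\ref{it:eff3_1}, and finally invoke item~\ref{it:joining2} of Assumption~\ref{Asmp:joining} for item~\ref{it:eff3_2}. The subtlety you flag about identifying a (counterfactual) entrant's guiding belief with $\mu_n^{k+1}$ is exactly the point the paper resolves by the implicit identification stated just before Assumption~\ref{Asmp:joining}, so your treatment matches the intended argument.
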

\begin{proof}
First, combining Assumptions~\ref{Asmp:rbst} and~\ref{Asmp:coh}, we conclude that
if $n \in e_k$, then
\begin{equation}\label{eq:main}
\mu_n^k(\theta = 1) = \pi(H \mid \Omega_k).
\end{equation}
In particular, by~\eqref{eq:marg} and by Definitions~\ref{Def:type_rho} and~\ref{Def:type_rho_comm}, 
we have $t_n^k = \langle\pi(H \mid \Omega_k)\rangle_k.$ We have proved~\ref{it:eff2}.

    Now we prove~\ref{it:eff1}.
    If $n \in e_1$, then 
    $$
        I_n^1 = \Omega_1 \cap \Omega = \Omega_1.
    $$    
    Next, suppose $n$ enters the market between rounds $k-1$ and $k$: 
    $$e_k = e_{k-1} \cup \{n\}.$$ 
    Then, by item~\ref{it:joining1} of Assumption~\ref{Asmp:joining}
     we have $$\mu_n^k(\theta = 1) \ne \xi_{k-1}.$$
    Therefore, by~\eqref{eq:main}, we have 
    \begin{equation}\label{eq:proof1}
    \pi(H \mid \Omega_k)\ne \xi_{k-1}.
    \end{equation}
    Further, by \ref{it:eff2} for $n \in e_{k-1}$ and by Assumption~\ref{Asmp:same_prob}, we have 
    $$\xi_{k-1} = \pi\bigl(H \bigm| \Omega_{k-1}\bigr).$$
    This, together with~\eqref{eq:proof1}, implies that $\Omega_k \ne \Omega_{k-1}$. 
    Thus, the only alternative remaining is~\eqref{eq:sharing}: 
    \begin{equation}\label{eq:proof2}
        \Omega_k = I^{k-1}_n \df \Omega_{k-1} \cap I_n.
    \end{equation} 
    Therefore, 
    $\Omega_k \subseteq I_n$ and we have $I^k_n = \Omega_k$. Since $\Omega_l \subseteq \Omega_k \subseteq I_n$ for $l\ge k$, we have 
    proved~\ref{it:eff1}. In addition, combining~\eqref{eq:proof1} and~\eqref{eq:proof2}, 
    we come to item~\ref{it:eff3_1} of~\ref{it:eff3}. 

    It remains to prove item~\ref{it:eff3_2} of~\ref{it:eff3}.
    Suppose an expert $n \in E \setminus e_k$ satisfies~\eqref{eq:can_gain_pi}. If~$n$ joins~$e_k$ and
    chooses alternative~\eqref{eq:sharing}, then due to~\eqref{eq:main} we would necessarily have
    $$
        \mu_n^{k+1}(\theta = 1) = \pi(H \mid \Omega_{k+1}) = \pi\bigl(H \bigm| I_n^k\bigr) \ne \xi_k.
    $$
    Thus, by item~\ref{it:joining2} of Assumption~\ref{Asmp:joining} we have $k\ne k_{\infty}$. This completes the proof.
\end{proof}

Finally, we discuss information pooling in a situation where experts' information consists of direct arguments. 
\begin{Th}[formalization of~\ref{it:T2}]
Suppose $\mathcal{M}$ is entangled with $H$.
\begin{equation*}
\mbox{If for each $n \in E$, we have}\quad
\left[
    \begin{array}{l} 
I_n \subseteq  H, \\ 
I_n \supseteq  H, \\
I_n \subseteq  \neg H, \\
I_n \supseteq  \neg H,
\end{array}
\right.\quad\mbox{then}\quad
\left[
    \begin{array}{l} 
\Omega_{k_\infty} \subseteq  H, \\ 
\Omega_{k_\infty} \subseteq  \neg H, \\
\Omega_{k_\infty} = \bigcap_{n\in E} I_n,
\end{array}
\right.
\end{equation*}
where all the relationships 
are fulfilled up to a null set.
\end{Th}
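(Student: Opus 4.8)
The plan is to reduce everything to Fact~\ref{Fact:final_state} and a short case analysis. By~\eqref{eq:final_state_inf} we already have $\Omega_{k_\infty} = \bigcap_{n\in e_{k_\infty}} I_n$, and by the second equality in~\eqref{eq:final_state_prob} we have $\pi(H\mid I_n^{k_\infty}) = \pi(H\mid\Omega_{k_\infty})$ for every $n\in E$; all these conditional probabilities are legitimate, since each $I_n^{k_\infty}$ contains $\bigcap_{m\in E} I_m$, which is nonnull by the standing assumption $\pi(\bigcap_{m\in E} I_m)\ne 0$. The organizing observation is that the third alternative $\Omega_{k_\infty} = \bigcap_{n\in E} I_n$ fails, up to a null set, \emph{exactly} when there is some $n\in E\setminus e_{k_\infty}$ with $\pi(\Omega_{k_\infty}\setminus I_n^{k_\infty}) > 0$. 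Indeed $\bigcap_{n\in E} I_n \subseteq \Omega_{k_\infty}$ always holds, so the equality is equivalent to $\Omega_{k_\infty}\subseteq I_n$ up to null for all $n$, and the experts in $e_{k_\infty}$ contribute nothing new because $I_n^{k_\infty} = \Omega_{k_\infty}$ there by~\ref{it:eff1}.

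So I would assume the third alternative fails, fix such an $n$, and set $B\df\Omega_{k_\infty}$ and $C\df I_n^{k_\infty} = B\cap I_n$, recording that $\pi(B\setminus C)>0$ and $\pi(H\mid C)=\pi(H\mid B)$. Then I split on which of the four nesting relations the direct argument $I_n$ satisfies (each read up to a null set). The two \emph{subset} cases are immediate: if $I_n\subseteq H$ then $C\subseteq H$, so $\pi(H\mid B)=\pi(H\mid C)=1$ and $B\subseteq H$; if $I_n\subseteq\neg H$ the symmetric computation gives $\pi(H\mid B)=0$ and $B\subseteq\neg H$.

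The two \emph{superset} cases carry the actual content. If $I_n\supseteq H$, then $B\setminus C = B\cap\neg I_n\subseteq\neg H$, so $H\cap(B\setminus C)$ is null and hence $\pi(H\cap B)=\pi(H\cap C)$. Substituting into the equality $\pi(H\cap B)/\pi(B)=\pi(H\cap C)/\pi(C)$ and using $\pi(C)<\pi(B)$ forces $\pi(H\cap B)=0$, i.e.\ $B\subseteq\neg H$ up to null. The case $I_n\supseteq\neg H$ is the mirror image with $\neg H$ in place of $H$, yielding $B\subseteq H$. Thus whenever the third alternative fails, one of the first two holds, and the theorem follows.

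I expect the main obstacle to be bookkeeping the \enquote{up to a null set} qualifiers cleanly: checking that every conditioning event stays nonnull so that the conditional probabilities and, crucially, the division step in the superset cases are legitimate. The genuinely non-formal step is the elementary sub-claim that equal conditional probabilities together with a strictly smaller (positive-measure-deficit) conditioning set force the numerator $\pi(H\cap B)$ to vanish; I would isolate it as a one-line lemma so the four-way case split reads off cleanly.
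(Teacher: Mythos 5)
Your proposal is correct and follows essentially the same route as the paper's proof: both reduce to Fact~\ref{Fact:final_state}, split on the four nesting relations, and in the superset cases exploit that the numerator $\pi(H\cap\Omega_{k_\infty})$ is preserved while the conditioning set strictly shrinks. The only difference is logical packaging — you take the contrapositive of the third alternative alone and conclude one of the first two, whereas the paper negates all three conclusions at once and derives a contradiction — which is an equivalent arrangement of the same computation.
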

\begin{proof}
    Suppose the conclusion of the theorem is not true. Thus, we have the following:
    $$
        \pi(H \cap \Omega_{k_\infty}) \ne 0,\qquad \pi(\neg H \cap\Omega_{k_\infty})\ne 0,
    $$
    and it is \emph{not} true that 
    $$
        \Omega_{k_\infty} \subseteq \bigcap_{n\in E} I_n
    $$
    up to a null set (because the reverse inclusion is always true). The last means that there exists $n\in E$ such that it is \emph{not} true that $\Omega_{k_\infty} \subseteq I_n$ up to a null set. 
    Putting it all together, we get 
    \begin{equation}\label{eq:by_contr}
   \begin{cases}
\pi(H \mid \Omega_{k_\infty}) \ne 0, \\ 
\pi(H \mid \Omega_{k_\infty}) \ne 1, \\
\exists\,n \in E 
\colon\; \pi(\Omega_{k_\infty} \cap I_n) \ne \pi(\Omega_{k_\infty}).
\end{cases} 
\end{equation}

For such an expert~$n$, consider the cases where $I_n \subseteq  H$ or $I_n \subseteq  \neg H$ up to a null set.
In these cases, $\pi\bigl(H \bigm| I_n^{k_\infty}\bigr)$ equals either $1$ or $0$, respectively. This fact, together with the first two relationships in~\eqref{eq:by_contr}, contradicts the second identity in~\eqref{eq:final_state_prob}.

Next, suppose $I_n \supseteq H$. Due to the third relationship in~\eqref{eq:by_contr}, we get 
\begin{equation}\label{eq:cond_prob}
\pi\bigl(H \bigm| I_n^{k_\infty}\bigr) = \frac{\pi(\Omega_{k_\infty} \cap H)}{\pi(\Omega_{k_\infty} \cap I_n)} 
\ne \frac{\pi(\Omega_{k_\infty} \cap H)}{\pi(\Omega_{k_\infty})} = \pi(H \mid \Omega_{k_\infty}).
\end{equation}
Again, we arrive at a contradiction with the second identity in~\eqref{eq:final_state_prob}.

Finally, if $I_n \supseteq \neg H$, we can substitute $H$ for $\neg H$ in~\eqref{eq:cond_prob}.
We get 
$$
    \pi\bigl(\neg H \bigm| I_n^{k_\infty}\bigr) \ne \pi(\neg H \mid \Omega_{k_\infty}).
$$
But this immediately implies the same relationship for $H$ instead of $\neg H$.
Thus, we arrive at the same contradiction as before and complete the proof.
\end{proof}

\section{Acknowledgments}
We are grateful to Alexander Nesterov, Dmitry Rutsky, Fedor Sandomirskiy, and Dmitry Spelnikov for imparting knowledge, fruitful discussions, and valuable comments.

\printbibliography

\end{document}